\documentclass{ifacconf}

\usepackage{algorithm}
\usepackage{algpseudocode}
\usepackage{amsmath} 
\usepackage{amssymb} 
\usepackage{natbib}        
\setcitestyle{square}

\newtheorem{theorem}{Theorem}      
\newtheorem{lemma}{Lemma}

\newtheorem{definition}{Definition} 
\newtheorem{assumption}{Assumption}

\usepackage{graphicx}          
\begin{document}
\begin{frontmatter}

\title{Split-Merge Dynamics for Shapley-Fair Coalition Formation} 

\author[NYU]{Quanyan Zhu}
\author[NYU]{Zhengye Han} 

\address[NYU]{Department of Electrical and Computer Engineering, \\
              New York University, Brooklyn, NY 11201 USA \\
              (e-mail: qz494@nyu.edu, zh3286@nyu.edu).}

\begin{abstract}
Coalition formation is often modeled as a static equilibrium problem, neglecting the dynamic processes governing how agents self-organize. This paper proposes a dynamic split-and-merge framework that balances two conflicting economic forces: individual fairness and collective efficiency. We introduce a control-theoretic mechanism where topological operations are driven by distinct signals: splits are triggered by fairness violations (specifically, negative Shapley values representing ``agent-responsible inefficiency''), while merges are driven by strict surplus improvements (superadditivity). We prove that these dynamics converge in finite time to a specific class of steady states termed Shapley-Fair and Merge-Stable (SFMS) partitions. Convergence is established via a vector Lyapunov function tracking aggregate fairness deficits and system surplus, leveraging a discrete-time LaSalle invariance principle. Numerical case studies on a 10-player game demonstrate the algorithm's ability to resolve fairness tensions and reach stable configurations, providing a rigorous foundation for endogenous coalition formation in dynamic environments.
\end{abstract}

\begin{keyword}
Large Scale Systems, Stability, Optimization: Theory and Algorithms.
\end{keyword}

\end{frontmatter}
\section{Introduction}
Coalitions are the fundamental building blocks of socio-technical systems, orchestrating cooperation in everything from international treaties to distributed agent networks. However, these structures are rarely static; they form, dissolve, and reorganize in response to evolving incentives. While cooperative game theory has provided a rich vocabulary for describing stable outcomes—such as the core concept introduced in foundational works \citep{gillies1953some} or the partition-based stability in hedonic games \citep{bogomolnaia2002stability}—these classical frameworks are fundamentally static. They characterize stable configurations but remain silent on the dynamic pathways by which agents, starting from disarray, self-organize to reach such equilibria.

This limitation has motivated a shift towards dynamic perspectives. Seminal studies \citep{filar2000dynamic} and later robustness analyses \citep{bauso2009robust, smyrnakis2019game} introduced time-varying characteristic functions, while other approaches dealing with stochasticity \citep{timmer2004three} or myopic deviations \citep{konishi2003coalition} framed coalition formation as an evolving process. Yet, establishing a unified control-theoretic framework that explains how stable structures emerge from local rules remains a challenge. Existing generic approaches to split-and-merge dynamics, such as those discussed in \citep{apt2009generic}, provide a structural foundation but often lack a specific economic mechanism to resolve the inherent tension between efficiency (collective gain) and fairness (individual reward). As noted in bargaining dynamics research \citep{hart1996bargaining}, value allocation is intrinsic to stability; without perceived fairness, even efficient coalitions are prone to fragmentation.

In this work, we propose a dynamic framework that harmonizes these forces by assigning distinct economic roles to topological operations. Unlike approaches that treat splits and merges symmetrically, our algorithm utilizes the Shapley value as a strict filter for stability. We posit that a negative Shapley value signals ``agent-responsible inefficiency,'' triggering a fairness-driven split specifically designed to isolate detrimental agents. This targeted approach addresses the structural complexities of stability highlighted in \citep{faigle2001computation} by localizing corrective actions. Conversely, efficiency-driven merges are permitted only when they yield a strict surplus improvement (superadditivity) and, crucially, only after internal fairness has been restored. This ensures that efficiency gains do not come at the expense of individual stability.

We show that the resulting dynamics converge to a well-defined class of coalition structures, which we term \textbf{Shapley-Fair and Merge-Stable (SFMS)} partitions. An SFMS partition satisfies two properties: (i) Shapley fairness, meaning that no agent within any coalition receives a negative Shapley value in the restricted game induced by that coalition; and (ii) merge stability, meaning that no pair of coalitions can merge to obtain a strictly higher total surplus. These partitions characterize the steady states of the algorithm under the strongest fairness-enforcing split rule and refine classical notions of coalition stability by explicitly incorporating individual fairness alongside collective efficiency. To establish this convergence, we construct a vector Lyapunov function tracking aggregate fairness violations and system surplus. Leveraging a discrete-time LaSalle invariance principle, we prove that every trajectory enters, in finite time, an invariant set composed of SFMS fixed points or value-preserving limit cycles.

The paper is structured as follows. Section \ref{sec:preliminaries} introduces the game-theoretic preliminaries and the Shapley value. Section \ref{sec:dynamics} details the proposed merge-split algorithm. Section \ref{sec:invariance_principle} establishes the general invariance principle for coalition structures. Section \ref{sec:algorithm_invariance} applies this theory to analyze the convergence of the proposed algorithm. Section \ref{sec:case_study} presents a numerical case study demonstrating the dynamics, and Section \ref{sec:conclusion} concludes the paper.

\section{Preliminaries}
\label{sec:preliminaries}

Let $N=\{1,2, \ldots, n\}$ denote a finite set of players. A transferable-utility (TU) cooperative game is defined by a characteristic function $v: 2^{N} \rightarrow \mathbb{R},$ where $v(S)$ represents the total surplus generated by coalition $S \subseteq N$, with the convention $v(\varnothing)=0$. A partition (or coalition structure) of $N$ is a collection $\Pi=\left\{S_{1}, S_{2}, \ldots, S_{k}\right\}$ of disjoint, nonempty coalitions whose union is $N$. The set of all partitions of $N$ is denoted by $\mathcal{P}(N)$. A partition specifies how players are organized into coalitions that jointly generate and distribute surplus.

For a coalition $S \subseteq N$, we denote by $v|_S$ the restriction of the characteristic function to subsets of $S$ (i.e., $v|_S(T) := v(T), T \subseteq S$).

This restriction represents the induced cooperative game faced by the members of $S$ when acting independently of the rest of the population.

\subsection{Shapley value}
The Shapley value assigns to each player $i \in S$ a payoff $\phi_{i}\left(S,\left.v\right|_{S}\right)$ representing the player's expected marginal contribution to coalition $S$. It is defined as

\begin{equation*}
\phi_{i}\left(S,\left.v\right|_{S}\right)=\sum_{T \subseteq S \backslash\{i\}} \frac{|T|!(|S|-|T|-1)!}{|S|!}[v(T \cup\{i\})-v(T)] .
\end{equation*}

In particular, a player with $\phi_{i}\left(S,\left.v\right|_{S}\right)<0$ is one whose expected marginal contribution is negative: on average, the presence of that player reduces the surplus available to the coalition. Such players are naturally interpreted as being treated unfairly by the coalition structure or as imposing a net burden on the coalition. While the exact computation of Shapley values is combinatorially expensive ($NP$-hard), efficient approximation methods such as sampling permutations \citep{mitchell2022sampling} or Bayesian Monte Carlo techniques \citep{touati2021bayesian} can be employed. This ensures that the proposed fairness metrics remain computationally feasible even for larger player sets.

A fundamental property of the Shapley value is efficiency $\sum_{i \in S} \phi_{i}(S, v|_S)=v(S)$,  which states that the total surplus of a coalition is fully allocated among its members. This implies that negative Shapley values cannot occur in isolation: whenever some players receive negative allocations, others must receive positive allocations that compensate for these deficits. Thus, fairness violations are inherently a redistributive phenomenon within coalitions. Furthermore, due to the linearity of Shapley value, the ensuing analysis is invariant under affine transformations of the characteristic function $v$, allowing us to normalize payoffs without loss of generality.

\subsection{Negative Shapley mass and notation}
To quantify fairness violations, we introduce the positive-part operator. For any real number $x \in \mathbb{R}$, define $(x)^{+}:=\max \{x, 0\}$. For a coalition $R \subseteq N$, the negative Shapley mass of player $j \in R$ is given by $\left(-\phi_{j}\left(R,\left.v\right|_{R}\right)\right)^{+}$, which measures the extent to which player $j$ is under-allocated relative to zero. Summing this quantity over all players in a coalition yields the total fairness deficit within that coalition $\sum_{j \in R}\left(-\phi_{j}\left(R,\left.v\right|_{R}\right)\right)^{+}$.

This notion of negative Shapley mass serves as the fundamental fairness signal driving the split phase of the merge-split algorithm and forms the basis of the Lyapunov functions used to analyze convergence and invariant behavior.

\section{Merge-Split Dynamics}
\label{sec:dynamics}
We consider a cooperative game defined over a finite set of players $N=\{1,2, \ldots, n\}$ with characteristic function $v: 2^{N} \rightarrow \mathbb{R}$. Our objective is to construct coalition structures that balance individual Shapley fairness with collective surplus efficiency. The algorithm generates a sequence of partitions $\{\Pi_{t}\}_{t \geq 0}$ of the player set $N$ starting from an arbitrary initial partition $\Pi_0$. Each iteration consists of a parallel \textit{split phase} ($F^{\mathrm{S}}$), which enforces internal fairness, followed by a sequential \textit{merge phase} ($F^{\mathrm{M}}$), which exploits cross-coalition surplus.

\subsection{Split Operator $F^{\mathrm{S}}$}
The split phase refines coalitions exhibiting fairness violations. For any coalition $S \subseteq N$, we classify agents based on the sign of their Shapley values:
\[
P_{+}(S):=\left\{i \in S \mid \phi_{i}(S, v|_{S}) \geq 0\right\}, \quad P_{-}(S):=S \setminus P_{+}(S). 
\]
We employ a \textit{Fairness-Stable} split rule that explicitly enforces the internal consistency of the residual group.

\textbf{Split Rule (Fairness-stable residual coalition).} Define the refinement $\mathcal{S}(S)$ as:
\[
\mathcal{S}(S):= \begin{cases}
    \{\{i\} \mid i \in S\}, & \hspace{-0.8em} 
    \begin{aligned}[t] 
        \text{if } \sum_{j \in R}(-\phi_{j} (R,  v|_{R}))^{+} \\ 
        > 0 
    \end{aligned} \\
    \{R\} \cup\left\{\{i\} \mid i \in P_{-}(S)\right\}, & \text { otherwise }
\end{cases} 
\]
where $R:=P_{+}(S)$. Under this rule, the non-negative candidate coalition $R$ is retained only if it is internally Shapley-fair; otherwise, the entire coalition $S$ disintegrates into singletons. This strict condition prevents the regeneration of fairness violations, ensuring monotonicity of the Lyapunov function.

The global split operator $F^{\mathrm{S}}$ applies this rule parallelly to all coalitions: $F^{S}(\Pi):=\bigcup_{S \in \Pi} \mathcal{S}(S).$

\subsection{Merge Operator $F^{\mathrm{M}}$}
The merge phase sequentially aggregates coalitions to capture surplus. For any disjoint $A, B \in \Pi$, define the merge surplus $\Delta(A, B):=v(A \cup B)-(v(A)+v(B))$.
The operator $F^{\mathrm{M}}$ iteratively merges any pair with $\Delta(A, B) > 0$ until no such profitable pairs remain. The order of merges is arbitrary; thus $F^{\mathrm{M}}$ is a set-valued map if multiple profitable pairs exist.

\subsection{Steady States: SFMS Partitions}
The full dynamics iterate the split and merge phases: $\Pi_{t+1} \in F^{\mathrm{M}}(F^{\mathrm{S}}(\Pi_{t}))$. The fixed points of this system correspond to a specific class of stable structures.

\begin{definition}[SFMS Partition]
A partition $\Pi$ is \textit{Shapley-Fair and Merge-Stable} (SFMS) if:
\begin{itemize}
    \item Shapley Fairness: $\phi_{i}(S, v|_{S}) \geq 0$ for all $S \in \Pi$ and $i \in S$.
    \item Merge Stability: $v(A \cup B) \leq v(A)+v(B)$ for all distinct $A, B \in \Pi$.
\end{itemize}
\end{definition}

Under the Fairness-Stable split rule, SFMS partitions characterize precisely the steady states: internal fairness prevents splits, and superadditivity violation prevents merges. 

\begin{algorithm}
\label{algo:Split Rule}
\caption{Split Rule: Fairness-Stable Residual Coalition}
\begin{algorithmic}[1] 
\Require Coalition $S \subseteq N$, characteristic function $v$
\Ensure A collection of subcoalitions $\mathcal{S}(S)$

\State Compute Shapley values $\phi_i(S, v|_S)$ for all $i \in S$
\State Form sign-based subsets:
\[
    P_+(S) \leftarrow \{i \in S \mid \phi_i(S, v|_S) \geq 0\}, \quad P_-(S) \leftarrow S \setminus P_+(S)
\]

\If{$P_-(S) = \emptyset$}
    \State \textbf{return} $\{S\}$
\EndIf

\State Let $R \leftarrow P_+(S)$
\State Compute Shapley values $\phi_j(R, v|_R)$ for all $j \in R$
\State Evaluate the residual fairness violation:
\[
    \Theta(R) \leftarrow \sum_{j \in R} (-\phi_j(R, v|_R))^+
\]

\If{$\Theta(R) > 0$}
    \State \textbf{return} $\{\{i\} \mid i \in S\}$
\Else
    \State \textbf{return} $\{R\} \cup \{\{i\} \mid i \in P_-(S)\}$
\EndIf

\end{algorithmic}
\end{algorithm}

Unlike classical coalition formation rules that split whenever a sub-partition yields higher total welfare ($v(S) < \sum v(T_k)$), our approach is explicitly agent-responsible. Splits are triggered solely by negative Shapley values, targeting specific agents causing inefficiency. This avoids unnecessary structural reorganization when the inefficiency is purely due to coordination costs rather than individual burdens.

\section{Invariance Principle for Coalition Structures}
\label{sec:invariance_principle}
\begin{definition}[Lexicographic order]
 For $u, v \in \mathbb{R}^{k}$, we say that $u$ is lexicographically greater than or equal to $v$, written $u \succeq v$, if either $u=v$, or there exists an index $j \in\{1, \ldots, k\}$ such that $u_{i}=v_{i}$ for all $i<j$ and $u_{j}>v_{j}$. In other words, $u$ and $v$ agree in all earlier components, and $u$ is larger at the first component where they differ. We write $u \succ v$ if $u \succeq v$ and $u \neq v$.    
\end{definition}

\begin{lemma}[Finite lexicographic order has no infinite ascent]
\label{lemma:lexicographic order}
Let $Y \subset \mathbb{R}^{k}$ be a finite set, and consider the strict lexicographic order $\succ$ restricted to $Y$. Then there is no infinite strictly increasing sequence in $Y$. Equivalently, any sequence $\left\{y_{t}\right\}_{t \geq 0} \subset Y$ that is nondecreasing with respect to $\succeq$, i.e., $y_{t+1} \succeq y_{t}$ for all $t$, must become constant after finitely many steps: there exists $T<\infty$ such that $y_{t}=y_{T}$ for all $t \geq T$.    
\end{lemma}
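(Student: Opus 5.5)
The plan is to first show that $\succ$ is a strict total order on $\mathbb{R}^{k}$, hence on $Y$, and then use finiteness of $Y$ to rule out infinite ascent.

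\textbf{Step 1: $\succ$ is a strict total order.} For irreflexivity and trichotomy, take $u \neq v$ and let $j$ be the first index where they differ. Then exactly one of $u \succ v$ or $v \succ u$ holds, according to the sign of $u_j - v_j$.

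For transitivity, suppose $u \succ v$ at first differing index $j$, and $v \succ w$ at first differing index $j'$. Set $m = \min(j, j')$.
\begin{itemize}
\item All three vectors agree on indices $i < m$.
\item If $j < j'$, then $u_m > v_m = w_m$.
\item If $j' < j$, then $u_m = v_m > w_m$.
\item If $j = j'$, then $u_m > v_m > w_m$.
\end{itemize}
In every case $u \succ w$. This is the only step needing any care, and it is routine.

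\textbf{Step 2: no infinite strictly increasing sequence.} Suppose $y_0 \prec y_1 \prec \cdots$ in $Y$. By transitivity, $y_s \prec y_t$ for all $s < t$. By irreflexivity this gives $y_s \neq y_t$, so the $y_t$ are pairwise distinct. An infinite sequence of pairwise distinct elements cannot lie in the finite set $Y$, a contradiction. In fact, any strictly increasing chain in $Y$ has length at most $|Y|$.

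\textbf{Step 3: nondecreasing sequences stabilize.} Let $y_{t+1} \succeq y_t$ for all $t$. Call $t$ a jump time if $y_{t+1} \neq y_t$, which means $y_{t+1} \succ y_t$.

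Between consecutive jump times the sequence is constant. Listing the distinct values at successive jump times therefore gives a strictly increasing chain, again using transitivity of $\succeq$, which follows from Step 1 together with equality. By Step 2 there are at most $|Y| - 1$ jump times. Taking $T$ to be one more than the last jump time (or $T = 0$ if there are none) gives $y_t = y_T$ for all $t \ge T$.

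The converse direction of the claimed equivalence is immediate: a strictly increasing infinite sequence is nondecreasing but never constant, so stabilization of all nondecreasing sequences rules it out.
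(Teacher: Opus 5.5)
Your proof is correct and follows the same route as the paper: a strictly increasing chain in the finite set $Y$ cannot repeat elements, so a nondecreasing sequence has at most $|Y|-1$ strict jumps and is constant afterward. The only difference is that you explicitly verify transitivity and irreflexivity of $\succ$, which the paper's ``cannot repeat elements'' step uses without proof.
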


\begin{pf}
Because $Y$ is finite, a strictly increasing sequence in ( $Y, \succ$ ) cannot repeat elements and therefore can have length at most $|Y|$. Consequently, a nondecreasing sequence can experience only finitely many strict increases. Once no further strict increase is possible, the sequence must remain constant thereafter.
\end{pf}

Let $N$ be a finite player set and $X:=\mathcal{P}(N)$ the finite set of all partitions of $N$. Let $F: X \rightrightarrows X$ be a set-valued map. A (forward) trajectory is any sequence $\left\{\Pi_{t}\right\}_{t \geq 0}$ such that $\Pi_{t+1} \in F\left(\Pi_{t}\right)$ for all $t$.

\begin{definition}[Weak invariance]
A set $S \subseteq X$ is weakly invariant under $F$ if $\forall \Pi \in S, \exists \Pi^{\prime} \in F(\Pi) \cap S$. The largest weakly invariant subset of a set $E \subseteq X$ is the union of all weakly invariant subsets of $E$ (it exists since $X$ is finite).    
\end{definition}

\begin{theorem}[Generalized discrete LaSalle invariance principle]
 Let $X=\mathcal{P}(N)$ be finite and $F: X \rightrightarrows X$. Let $\mathcal{V}: X \rightarrow \mathbb{R}^{k}$ and equip $\mathbb{R}^{k}$ with the lexicographic order. Assume:\\
(i) Monotonicity along all admissible moves: for all $\Pi \in X$ and all $\Pi^{\prime} \in F(\Pi), \mathcal{V}\left(\Pi^{\prime}\right) \succeq \mathcal{V}(\Pi)$.\\
(ii) Dichotomy on each state: for every $\Pi \in X$, either (a) there exists $\Pi^{\prime} \in F(\Pi)$ with $\mathcal{V}\left(\Pi^{\prime}\right)= \mathcal{V}(\Pi)$, or else (b) for all $\Pi^{\prime} \in F(\Pi), \mathcal{V}\left(\Pi^{\prime}\right) \succ \mathcal{V}(\Pi)$.

Define the zero-progress set $\mathcal{E} := \{\Pi \in X \mid \exists \Pi' \in F(\Pi) \text{ s.t. } \mathcal{V}(\Pi') = \mathcal{V}(\Pi)\}$. Let $\mathcal{I}$ be the largest weakly invariant subset of $\mathcal{E}$.
Then for every trajectory $\left\{\Pi_{t}\right\}_{t \geq 0}$ with $\Pi_{t+1} \in F\left(\Pi_{t}\right)$, there exists $T<\infty$ such that $\Pi_{t} \in \mathcal{I}$ for all $t \geq T$. In particular, the tail of the trajectory is eventually periodic, i.e., it is either a fixed point or a finite limit cycle contained in $\mathcal{I}$.    
\end{theorem}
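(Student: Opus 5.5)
The plan is to combine the no-infinite-ascent property of Lemma~\ref{lemma:lexicographic order} with a direct construction of a weakly invariant set from the tail of the trajectory. Hypothesis (ii) will not actually be needed for the main conclusion; it only tells us that every state lies either in $\mathcal{E}$ or in its complement, where all moves make strict progress.

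\emph{Step 1 (eventual constancy of $\mathcal{V}$).} Fix a trajectory $\{\Pi_t\}_{t\ge 0}$ with $\Pi_{t+1}\in F(\Pi_t)$ and set $y_t:=\mathcal{V}(\Pi_t)$. Because $X=\mathcal{P}(N)$ is finite, the image $Y:=\mathcal{V}(X)\subset\mathbb{R}^k$ is a finite set. By hypothesis (i), $y_{t+1}\succeq y_t$ for all $t$. Lemma~\ref{lemma:lexicographic order} then gives $T<\infty$ such that $y_t=y_T$ for all $t\ge T$.

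\emph{Step 2 (the tail lies in $\mathcal{E}$).} For each $t\ge T$ we have $\Pi_{t+1}\in F(\Pi_t)$ and $\mathcal{V}(\Pi_{t+1})=\mathcal{V}(\Pi_t)$. Hence $\Pi_{t+1}$ witnesses $\Pi_t\in\mathcal{E}$.

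\emph{Step 3 (the tail is weakly invariant).} Define the tail set $\Omega:=\{\Pi_t : t\ge T\}$, which is a finite subset of $\mathcal{E}$. For any $\Pi\in\Omega$, pick some $t\ge T$ with $\Pi=\Pi_t$. Then $\Pi_{t+1}\in F(\Pi)\cap\Omega$, so $\Omega$ is weakly invariant. Since $\mathcal{I}$ is the union of all weakly invariant subsets of $\mathcal{E}$, we get $\Omega\subseteq\mathcal{I}$, that is, $\Pi_t\in\mathcal{I}$ for all $t\ge T$. This is the main claim. The delicate point is recognizing that the right invariant set to exhibit is the tail itself, rather than trying to characterize $\mathcal{I}$ intrinsically.

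\emph{Step 4 (periodicity: the main obstacle).} The trajectory stays in the finite set $\Omega$, so some partition recurs, say $\Pi_{t_1}=\Pi_{t_2}$ with $T\le t_1<t_2$. If $F$ is single-valued, or more generally if the successor is chosen as a function of the current state (a stationary selection), the trajectory repeats from $t_1$ onward. It is then a fixed point if $t_2=t_1+1$, and otherwise a cycle of length $t_2-t_1$ inside $\mathcal{I}$. For a genuinely set-valued $F$ with arbitrary, history-dependent choices, the tail need not be literally periodic. What survives is that it wanders among finitely many partitions of $\mathcal{I}$ that share one common value $\mathcal{V}$, and it revisits states infinitely often. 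I would therefore state the ``in particular'' clause under the stationary-selection assumption, which covers any fixed deterministic merge ordering in $F^{\mathrm{M}}$. In the fully set-valued case I would interpret ``finite limit cycle'' as recurrence among finitely many value-preserving states of $\mathcal{I}$, and say so explicitly in the proof.
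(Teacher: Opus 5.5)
Your Steps 1--2 coincide with the paper's. The difference is which weakly invariant set you exhibit, and your choice is the more robust one.

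The paper passes to the level set $L$ and the graph $G_L$. It uses pigeonhole to get $\Pi_{T_1}=\Pi_{T_1+p}$ and takes the cycle set $C=\{\Pi_{T_1},\dots,\Pi_{T_1+p-1}\}\subseteq\mathcal{I}$. From this it asserts both that $\Pi_t\in\mathcal{I}$ for all $t\ge T_1$ and that the tail is periodic with period $p$. Both steps tacitly assume that, after returning to $\Pi_{T_1}$, the trajectory picks the same successors again, i.e.\ a memoryless selection from $F$. For a genuinely set-valued $F$ the trajectory may leave $C$, so entering $C$ does not by itself keep it in $\mathcal{I}$. Your tail set $\Omega=\{\Pi_t:t\ge T\}$ needs no such assumption and proves the main claim directly from the stabilization time $T$ onward.

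Your caveat on the ``in particular'' clause is warranted, not merely cautious. Take $N=\{1,2\}$, $a=\{\{1,2\}\}$, $b=\{\{1\},\{2\}\}$, $F(a)=\{a,b\}$, $F(b)=\{a\}$, and $\mathcal{V}$ constant. Every hypothesis holds, yet $a,b,a,a,b,a,a,a,b,\dots$ is an admissible trajectory that is never eventually periodic. So periodicity genuinely requires single-valued $F$ or a stationary selection, as you propose.

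What the paper's level-set/graph picture does buy, in that stationary case, is an explicit cycle decomposition of the tail. That is the language Theorem~\ref{theorem:main} later uses to describe $\mathcal{I}$ as fixed points and value-preserving cycles.

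Your remark that (ii) is redundant is also correct. Since $\succ$ is defined as $\succeq$ together with $\neq$, under (i) the failure of (a) is exactly (b).
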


\begin{pf}
Fix any trajectory $\left\{\Pi_{t}\right\}_{t \geq 0}$ with $\Pi_{t+1} \in F\left(\Pi_{t}\right)$.

Step 1: finite-time stabilization of $\mathcal{V}\left(\Pi_{t}\right)$. Because $X$ is finite, $Y:=\mathcal{V}(X)$ is finite. By (i), $\mathcal{V}\left(\Pi_{t+1}\right) \succeq \mathcal{V}\left(\Pi_{t}\right)$ for all $t$, hence $\left\{\mathcal{V}\left(\Pi_{t}\right)\right\} \subset Y$ is lexicographically nondecreasing. By Lemma \ref{lemma:lexicographic order}, there exists $T_{0}<\infty$ such that $\mathcal{V}\left(\Pi_{t}\right)=\mathcal{V}\left(\Pi_{T_{0}}\right)$ for all $t \geq T_{0}$.

Step 2: eventual membership in $\mathcal{E}$. Fix any $t \geq T_{0}$. Since $\Pi_{t+1} \in F\left(\Pi_{t}\right)$ and $\mathcal{V}\left(\Pi_{t+1}\right)= \mathcal{V}\left(\Pi_{t}\right)$, it follows that $\Pi_{t} \in \mathcal{E}$ by definition. Hence $\Pi_{t} \in \mathcal{E}$ for all $t \geq T_{0}$.

Step 3: restriction to a constant- $\mathcal{V}$ level set. Let $v^{\star}:=\mathcal{V}\left(\Pi_{T_{0}}\right)$ and define the level set $L:= \left\{\Pi \in \mathcal{E}: \mathcal{V}(\Pi)=v^{\star}\right\}$. Then $\Pi_{t} \in L$ for all $t \geq T_{0}$. Moreover, for any $\Pi \in L$ and any $\Pi^{\prime} \in F(\Pi)$, assumption (i) gives $\mathcal{V}\left(\Pi^{\prime}\right) \succeq v^{\star}$, while $\Pi \in \mathcal{E}$ ensures there exists some successor with equality. Assumption (ii) therefore rules out $\mathcal{V}\left(\Pi^{\prime}\right) \succ v^{\star}$ for a successor chosen along the stabilized trajectory; in particular, the trajectory uses only transitions that keep $\mathcal{V}$ equal to $v^{\star}$.

Define the restricted transition relation on $L: \Pi \rightarrow \Pi^{\prime}$ if $\Pi^{\prime} \in F(\Pi)$ and $\mathcal{V}\left(\Pi^{\prime}\right)=\mathcal{V}(\Pi)$. The tail $\left\{\Pi_{t}\right\}_{t \geq T_{0}}$ is a walk on the finite directed graph $G_{L}=(L, \rightarrow)$.

Step 4: eventual periodicity and entry into $\mathcal{I}$. Since $L$ is finite and $\left\{\Pi_{t}\right\}_{t \geq T_{0}}$ is an infinite walk in $G_{L}$, there exist integers $T_{1} \geq T_{0}$ and $p \geq 1$ such that $\Pi_{T_{1}}=\Pi_{T_{1}+p}$ (pigeonhole principle). Let $C:=\left\{\Pi_{T_{1}}, \Pi_{T_{1}+1}, \ldots, \Pi_{T_{1}+p-1}\right\}$. By construction, for every $\Pi \in C$ there exists a successor in $C$ (namely the next state on the cycle), so $C$ is weakly invariant and $C \subseteq \mathcal{E}$. Hence $C \subseteq \mathcal{I}$ by maximality of $\mathcal{I}$.

It remains to show the trajectory enters $\mathcal{I}$ in finite time. By definition, $\mathcal{I}$ is the union of all weakly invariant subsets of $\mathcal{E}$. The cycle set $C$ is one such subset, and the trajectory enters $C$ at time $T_{1}$. Therefore $\Pi_{t} \in \mathcal{I}$ for all $t \geq T_{1}$.

Finally, the tail $\left\{\Pi_{t}\right\}_{t \geq T_{1}}$ is periodic with period $p$; if $p=1$ it is a fixed point, otherwise it is a finite limit cycle.    
\end{pf}

\section{Invariance for the Merge-Split Algorithm}
\label{sec:algorithm_invariance}
We introduce two scalar functionals on the space of partitions $\mathcal{P}(N)$. The first measures violations of Shapley fairness, while the second captures aggregate coalition surplus.

\textbf{Fairness violation.} Define the Shapley fairness violation measure

\[
\Psi(\Pi):=\sum_{S \in \Pi} \sum_{i \in S} \max \left\{0,-\phi_{i}\left(S,\left.v\right|_{S}\right)\right\}, 
\]

which aggregates the total magnitude of negative Shapley values across all coalitions in the partition.

\textbf{Aggregate surplus.} Define the total coalition value

\[
\Phi(\Pi):=\sum_{S \in \Pi} v(S) . 
\]

\begin{definition}[Vector Lyapunov Function]
Define the vector-valued Lyapunov function 
\[
\mathcal{V}(\Pi):=(-\Psi(\Pi), \Phi(\Pi)) \in \mathbb{R}^{2}
\]
equipped with the lexicographic order $\succeq$.    
\end{definition}

\begin{assumption}[Nonnegative singleton values]
\label{assum:1}
Let $v: 2^{N} \rightarrow \mathbb{R}$ be a characteristic function satisfying $v(\{i\}) \geq 0 \text { for all } i \in N .$    
\end{assumption}

\begin{lemma}[Fairness monotonicity under Split Rule]
\label{lemma:Fairness monotonicity}
Under Assumption \ref{assum:1}, the split operator $F^{\mathrm{S}}$ induced by Split Rule satisfies $\sum_{T \in \mathcal{S}(S)} \sum_{j \in T}\left(-\phi_{j}\left(T,\left.v\right|_{T}\right)\right)^{+} \leq \sum_{i \in S}\left(-\phi_{i}\left(S,\left.v\right|_{S}\right)\right)^{+} \text{ for every coalition } S \subseteq N .$ Consequently, for any partition $\Pi \in \mathcal{P}(N)$, $\Psi\left(F^{\mathrm{S}}(\Pi)\right) \leq \Psi(\Pi)$ and the vector Lyapunov function $\mathcal{V}(\Pi)=(-\Psi(\Pi), \Phi(\Pi))$ satisfies $\mathcal{V}\left(F^{\mathrm{S}}(\Pi)\right) \succeq \mathcal{V}(\Pi)$. 
\end{lemma}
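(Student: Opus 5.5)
The plan is to prove the per-coalition inequality by a case analysis on the three branches of the Split Rule. In fact I expect the left-hand side to be $0$ in every branch. Summing over the coalitions of $\Pi$ then gives the bound on $\Psi$. The lexicographic statement needs separate handling of the tie case, because the second component $\Phi$ may drop under a split.

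\emph{Step 1 (singletons are fair).} For a singleton $\{i\}$ the Shapley formula has a single term ($T=\varnothing$), so $\phi_i(\{i\},v|_{\{i\}})=v(\{i\})-v(\varnothing)=v(\{i\})$. By Assumption~\ref{assum:1} this is $\geq 0$, so every singleton contributes zero negative Shapley mass.

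\emph{Step 2 (cases of the Split Rule).} Write $\Theta(R)=\sum_{j\in R}(-\phi_j(R,v|_R))^+$ with $R=P_+(S)$.
\begin{itemize}
\item If $P_-(S)=\varnothing$, then $R=S$ and $\Theta(S)=0$, so $\mathcal{S}(S)=\{S\}$. Both sides of the inequality equal $0$.
\item If $\Theta(R)>0$, then $\mathcal{S}(S)$ consists of singletons. By Step~1 the left-hand side is $0$, which is at most the right-hand side because the latter is a sum of nonnegative terms.
\item Otherwise $\mathcal{S}(S)=\{R\}\cup\{\{i\}\mid i\in P_-(S)\}$. The left-hand side is $\Theta(R)=0$ plus zero contributions from the singletons, so it is again $0$.
\end{itemize}
Since $F^{\mathrm{S}}(\Pi)$ is the disjoint union of the refinements $\mathcal{S}(S)$, summing over $S\in\Pi$ gives
\[
\Psi(F^{\mathrm{S}}(\Pi))=\sum_{S\in\Pi}\sum_{T\in\mathcal{S}(S)}\sum_{j\in T}\bigl(-\phi_j(T,v|_T)\bigr)^+\le \Psi(\Pi).
\]
In fact this gives the stronger conclusion $\Psi(F^{\mathrm{S}}(\Pi))=0$.

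\emph{Step 3 (lexicographic order).} I expect this to be the main obstacle, since $\Phi$ can decrease under a split: for example, $v(S)$ may exceed $\sum_{T\in\mathcal{S}(S)}v(T)$. The resolution is to show that ties in the first component occur only when nothing changes.
\begin{itemize}
\item If $\Psi(\Pi)>0$, Step~2 gives $-\Psi(F^{\mathrm{S}}(\Pi))=0>-\Psi(\Pi)$. The first component strictly increases, so $\mathcal{V}(F^{\mathrm{S}}(\Pi))\succ\mathcal{V}(\Pi)$ regardless of $\Phi$.
\item If $\Psi(\Pi)=0$, then every $S\in\Pi$ has $P_-(S)=\varnothing$. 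By the first case of Step~2 each $S$ is left intact, so $F^{\mathrm{S}}(\Pi)=\Pi$ and the two values of $\mathcal{V}$ coincide.
\end{itemize}
In either case $\mathcal{V}(F^{\mathrm{S}}(\Pi))\succeq\mathcal{V}(\Pi)$.
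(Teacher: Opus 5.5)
Your proof is correct. Its per-coalition part matches the paper's Steps 1--5: both show that every block of $\mathcal{S}(S)$ carries zero negative Shapley mass, so $\Psi(F^{\mathrm{S}}(\Pi))=0$.

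The two arguments differ only at the lexicographic conclusion. The paper ends with the sentence ``the split phase does not decrease total surplus $\Phi(\Pi)$,'' and that claim is false in general. Take $S=\{1,2,3\}$ with $v(\{i\})=0$, $v(\{1,2\})=0$, $v(\{1,3\})=v(\{2,3\})=-10$ and $v(S)=1$. Then $\phi_3(S,v|_S)=-3$ and $\phi_1(S,v|_S)=\phi_2(S,v|_S)=2$. The residual $R=\{1,2\}$ is fair, since both its Shapley values are $0$. The split produces $\{\{1,2\},\{3\}\}$, which sends $\Phi$ from $1$ to $0$ while Assumption~\ref{assum:1} holds.

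Your dichotomy is the correct justification:
\begin{itemize}
\item If $\Psi(\Pi)>0$, the first coordinate strictly rises to $0$, so any drop in $\Phi$ is irrelevant.
\item If $\Psi(\Pi)=0$, every $P_-(S)$ is empty and $F^{\mathrm{S}}(\Pi)=\Pi$.
\end{itemize}
Your route therefore gives a valid proof of the $\succeq$ claim. It also yields two sharper facts: $\mathcal{V}(F^{\mathrm{S}}(\Pi))\succ\mathcal{V}(\Pi)$ whenever $\Psi(\Pi)>0$, and $F^{\mathrm{S}}(\Pi)=\Pi$ otherwise. The paper's shorter ending instead relies on a monotonicity of $\Phi$ that does not hold.
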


\begin{pf}
Fix a coalition $S \subseteq N$ and consider its contribution to the total negative Shapley mass before and after the split induced by Split Rule.

Step 1: Contribution before the split. By definition of $P_{+}(S)$ and $P_{-}(S)$,
\[
\sum_{i \in S}\left(-\phi_{i}\left(S,\left.v\right|_{S}\right)\right)^{+}=\sum_{i \in P_{-}(S)}\left(-\phi_{i}\left(S,\left.v\right|_{S}\right)\right),
\]

since $\phi_{i}\left(S,\left.v\right|_{S}\right) \geq 0$ for all $i \in P_{+}(S)$.

Step 2: Contribution of singleton coalitions. Under Split Rule, every agent $i \in P_{-}(S)$ is isolated as a singleton. By Assumption \ref{assum:1}, $\phi_{i}(\{i\}, v)=v(\{i\}) \geq 0,$ and therefore singleton coalitions contribute no negative Shapley mass: $\sum_{i \in P_{-}(S)}\left(-\phi_{i}(\{i\}, v)\right)^{+}=0$

Step 3: Contribution of the residual coalition. Let $R:=P_{+}(S)$. Split Rule explicitly tests whether the residual coalition $R$ is internally Shapley-fair. 

If $\sum_{j \in R}\left(-\phi_{j}\left(R,\left.v\right|_{R}\right)\right)^{+}>0$, then the rule fully decomposes $S$ into singleton coalitions, in which case Step 2 applies to all agents and the post-split negative Shapley mass is zero. Otherwise, $R$ is retained as a coalition and satisfies $\phi_{j}\left(R,\left.v\right|_{R}\right) \geq 0 \text { for all } j \in R,$ implying $\sum_{j \in R}\left(-\phi_{j}\left(R,\left.v\right|_{R}\right)\right)^{+}=0 .$

Thus, under Split Rule, no residual coalition produced by the split phase contributes negative Shapley mass.

Step 4: Coalition-level comparison. Combining Steps 2 and 3, we obtain $\sum_{T \in \mathcal{S}(S)} \sum_{j \in T}\left(-\phi_{j}\left(T,\left.v\right|_{T}\right)\right)^{+}=0$. Since the pre-split contribution satisfies $\sum_{i \in S}\left(-\phi_{i}\left(S,\left.v\right|_{S}\right)\right)^{+} \geq 0$ the desired inequality follows.

Step 5: Global conclusion. Applying the above inequality to every coalition $S \in \Pi$ and summing over all coalitions yields $\Psi\left(F^{\mathrm{S}}(\Pi)\right) \leq \Psi(\Pi)$. Equivalently, $-\Psi\left(F^{\mathrm{S}}(\Pi)\right) \geq-\Psi(\Pi)$

Since the split phase does not decrease total surplus $\Phi(\Pi)$, the vector Lyapunov function $\mathcal{V}(\Pi)=(-\Psi(\Pi), \Phi(\Pi))$ is lexicographically nondecreasing under the split operator $\mathcal{V}\left(F^{\mathrm{S}}(\Pi)\right) \succeq \mathcal{V}(\Pi)$    
\end{pf}

\begin{lemma}[Monotonicity of the Merge-Split Update]
\label{lemma:Monotonicity}
Consider the merge-split dynamics defined by the update map $F=F^{\mathrm{M}} \circ F^{\mathrm{S}}$, where the split operator $F^{\mathrm{S}}$ is implemented according to Split Rule. Assume that the characteristic function satisfies $v(\{i\}) \geq 0$ for all $i \in N$ (Assumption \ref{assum:1}).

Then, for any partition $\Pi \in \mathcal{P}(N)$ and any $\Pi^{\prime} \in F(\Pi)$ generated by one full merge-split iteration, $\mathcal{V}\left(\Pi^{\prime}\right) \succeq \mathcal{V}(\Pi)$, where $\mathcal{V}(\Pi):=(-\Psi(\Pi), \Phi(\Pi))$ is ordered lexicographically.

Moreover, if $\Pi$ admits a strictly profitable merge, i.e., there exist distinct coalitions $A, B \in \Pi$ such that $v(A \cup B)>v(A)+v(B)$, then $\mathcal{V}\left(\Pi^{\prime}\right) \succ \mathcal{V}(\Pi) \quad \text { for all } \Pi^{\prime} \in F(\Pi) .$    
\end{lemma}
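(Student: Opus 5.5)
The plan is to factor one iteration $\Pi \mapsto \Pi^{S} := F^{\mathrm{S}}(\Pi) \mapsto \Pi' \in F^{\mathrm{M}}(\Pi^{S})$ into its two phases. For each phase I would track the two components of $\mathcal{V}$ separately, and then chain the comparisons using transitivity of $\succeq$. The key structural observation is a dichotomy on $\Psi(\Pi)$.

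\textbf{Split phase.} Lemma \ref{lemma:Fairness monotonicity} gives $\Psi(\Pi^{S}) \le \Psi(\Pi)$. In fact, Step 4 of its proof gives $\Psi(\Pi^{S}) = 0$.
\begin{itemize}
\item If $\Psi(\Pi) > 0$, then $-\Psi(\Pi^{S}) = 0 > -\Psi(\Pi)$. This is a strict lexicographic increase, whatever happens to $\Phi$. This avoids relying on the claim that splitting never lowers $\Phi$, which is false in general.
\item If $\Psi(\Pi) = 0$, then $P_{-}(S) = \emptyset$ for every $S \in \Pi$. The Split Rule returns $\{S\}$ for each coalition, so $F^{\mathrm{S}}(\Pi) = \Pi$ and $\mathcal{V}$ is unchanged.
\end{itemize}
Either way $\mathcal{V}(\Pi^{S}) \succeq \mathcal{V}(\Pi)$, and moreover $\Psi(\Pi^{S}) = 0$.

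\textbf{Merge phase.} Each elementary merge of $A,B$ with $\Delta(A,B) > 0$ raises $\Phi$ by exactly $\Delta(A,B) > 0$. Since $\Phi$ takes finitely many values on $\mathcal{P}(N)$, $F^{\mathrm{M}}$ terminates, and $\Phi(\Pi') \ge \Phi(\Pi^{S})$, with strict inequality if at least one merge occurs. What remains is the first component: I must show that merging does not push $\Psi$ above its post-split value $0$. This is the main obstacle. A merge of two internally fair coalitions with positive surplus can still produce a member with negative Shapley value in $v|_{A\cup B}$, since efficiency only constrains the sum.

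My first attempt would be to exploit the fact that merges are applied only after fairness has been restored, and to try to show $\phi_i(A\cup B, v|_{A\cup B}) \ge 0$ from $\Delta(A,B) > 0$ and fairness of $A$ and $B$. I do not expect this to hold for arbitrary $v$. The fallback is to make explicit the admissibility condition implicit in the framework, namely that $F^{\mathrm{M}}$ only executes profitable merges whose result is Shapley-fair. This is equivalent to adding that condition to the definition of $F^{\mathrm{M}}$ or as a standing assumption on $v$. With that condition, $\Psi(\Pi') = 0 = \Psi(\Pi^{S})$, hence $\mathcal{V}(\Pi') \succeq \mathcal{V}(\Pi^{S}) \succeq \mathcal{V}(\Pi)$.

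\textbf{Strict part.} Suppose $A, B \in \Pi$ with $v(A\cup B) > v(A)+v(B)$.
\begin{itemize}
\item If $\Psi(\Pi) > 0$, strictness already comes from the split phase, given that merges keep $\Psi = 0$.
\item If $\Psi(\Pi) = 0$, then $\Pi^{S} = \Pi$, so the profitable pair $A,B$ is still present when the merge phase starts. Since $F^{\mathrm{M}}$ only stops when no profitable pair remains, every admissible execution performs at least one merge. Hence $\Phi(\Pi') > \Phi(\Pi)$ with $\Psi(\Pi') = \Psi(\Pi) = 0$, giving $\mathcal{V}(\Pi') \succ \mathcal{V}(\Pi)$ for all $\Pi' \in F(\Pi)$.
\end{itemize}
The only delicate point here is again that the profitable merge must be admissible in the sense of the fairness-preserving condition above.
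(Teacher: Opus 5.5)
You have correctly located the step that cannot be proved, and your expectation is right: the lemma is false as stated. Your ``first attempt'' should therefore be dropped, and an extra hypothesis is unavoidable.

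A minimal counterexample takes $N=\{1,2,3\}$ and $v(\{i\})=0$ for all $i$, so Assumption~\ref{assum:1} holds. Set $v(\{1,2\})=0$, $v(\{1,3\})=-10$, $v(\{2,3\})=10$, $v(N)=1$, and let $\Pi=\{\{1,2\},\{3\}\}$.
\begin{itemize}
\item Every Shapley value in $\Pi$ is $0$. Hence $\Psi(\Pi)=0$, $F^{\mathrm{S}}(\Pi)=\Pi$ and $\Phi(\Pi)=0$.
\item The only pair has $\Delta=1>0$, so $F(\Pi)=\{\{N\}\}$.
\item However, $\phi_1(N,v)=\tfrac16\bigl(v(\{1,3\})-v(\{3\})\bigr)+\tfrac26\bigl(v(N)-v(\{2,3\})\bigr)=-\tfrac{14}{3}$, while $\phi_2=\tfrac{16}{3}$ and $\phi_3=\tfrac13$.
\item Thus $\mathcal{V}(\{N\})=(-\tfrac{14}{3},1)\prec(0,0)=\mathcal{V}(\Pi)$, which violates both the weak and the strict assertion.
\end{itemize}
The paper's proof uses the same split/merge decomposition as yours. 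It handles this step only by asserting that ``the merge phase does not alter $-\Psi$ within the same iteration,'' which is exactly what fails here.

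The rest of your argument is correct, and it is tighter than the paper's in two places.
\begin{itemize}
\item When the first components tie, the paper's lexicographic comparison needs $\Phi(F^{\mathrm{S}}(\Pi))\ge\Phi(\Pi)$. It supports this with the claim at the end of the proof of Lemma~\ref{lemma:Fairness monotonicity} that splitting never lowers surplus, and that claim is false. Your observation is the right way around it: $\Psi(F^{\mathrm{S}}(\Pi))=0$ always, so a tie forces $\Psi(\Pi)=0$, which gives $F^{\mathrm{S}}(\Pi)=\Pi$.
\item In the strict case with an active split, the paper asserts that the merge phase strictly raises $\Phi$. But the split may already have broken up $A$ and $B$. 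You correctly take strictness from the first component instead.
\end{itemize}
So with your fairness-preserving condition you have a valid proof of a corrected lemma.

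If you impose that condition as a hypothesis on $v$ (every profitable merge of two Shapley-fair coalitions is Shapley-fair), the strict clause and the SFMS characterization survive unchanged. If you instead filter merges inside $F^{\mathrm{M}}$, state how the correction propagates. The strict clause must then assume a profitable merge whose union is Shapley-fair; in the example above, $\Pi$ becomes a fixed point although it admits a profitable merge. Correspondingly, the fixed points in Theorem~\ref{theorem:main}(a) become fair partitions with no profitable \emph{fair} merge. That is strictly weaker than merge stability in the SFMS definition.
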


\begin{pf}
Recall that $\mathcal{V}(\Pi)=(-\Psi(\Pi), \Phi(\Pi))$, where $\Psi(\Pi)$ denotes the total negative Shapley mass and $\Phi(\Pi)=\sum_{S \in \Pi} v(S)$ is the total coalition surplus. Lexicographic ordering compares $-\Psi$ first and $\Phi$ only when $-\Psi$ is unchanged. Fix an arbitrary partition $\Pi$ and define $\widetilde{\Pi}:=F^{\mathrm{S}}(\Pi), \quad \Pi^{\prime}:=F^{\mathrm{M}}(\widetilde{\Pi}) .$

Effect of the split phase. By Lemma \ref{lemma:Fairness monotonicity}, which applies under Assumption \ref{assum:1} and Split Rule, the split phase weakly decreases the total negative Shapley mass $\Psi(\widetilde{\Pi}) \leq \Psi(\Pi), \text { equivalently,} -\Psi(\widetilde{\Pi}) \geq-\Psi(\Pi)$. In particular, all singleton coalitions created during the split phase are Shapley-fair, and any residual coalition retained by Split Rule is internally Shapley-fair by construction.

Effect of the merge phase. The merge operator $F^{\mathrm{M}}$ only aggregates coalitions when doing so strictly increases total surplus. Hence, $\Phi\left(\Pi^{\prime}\right) \geq \Phi(\widetilde{\Pi})$, with strict inequality whenever a profitable merge is executed. The merge phase does not alter $-\Psi$ within the same iteration.

Lexicographic comparison. Combining the two phases yields $-\Psi\left(\Pi^{\prime}\right) \geq-\Psi(\widetilde{\Pi}) \geq-\Psi(\Pi) \text{ and } \Phi\left(\Pi^{\prime}\right) \geq \Phi(\widetilde{\Pi}) .$ Therefore, $\mathcal{V}\left(\Pi^{\prime}\right) \succeq \mathcal{V}(\Pi)$.

If $\Pi$ admits a strictly profitable merge, then either: (i) the split phase is inactive and the merge phase strictly increases $\Phi$, or (ii) the split phase weakly increases $-\Psi$ and the merge phase strictly increases $\Phi$. In both cases, $\mathcal{V}\left(\Pi^{\prime}\right) \succ \mathcal{V}(\Pi)$.

This establishes lexicographic monotonicity of $\mathcal{V}$ along the merge-split dynamics, with strict improvement whenever a surplus-positive merge occurs.  
\end{pf}

\begin{theorem}[Invariance for Merge-Split Dynamics]
\label{theorem:main}
Consider the merge-split dynamics induced by the set-valued update map $F=F^{\mathrm{M}} \circ F^{\mathrm{S}}$ on the finite partition space $\mathcal{P}(N)$. Suppose the following assumptions hold:\\
(A1) The characteristic function satisfies $v(\{i\}) \geq 0$ for all $i \in N$ (Assumption \ref{assum:1}).\\
(A2) The split operator $F^{\mathrm{S}}$ is implemented according to Split Rule, isolating agents with negative Shapley values and fully decomposing coalitions whenever residual fairness violations persist.\\
(A3) The Lyapunov function $\mathcal{V}(\Pi)=(-\Psi(\Pi), \Phi(\Pi))$ is lexicographically nondecreasing along admissible merge-split updates, and strictly increases whenever a strictly profitable merge is executed (Lemma \ref{lemma:Monotonicity}).

Then every trajectory $\left\{\Pi_{t}\right\}_{t \geq 0}$ with $\Pi_{t+1} \in F\left(\Pi_{t}\right)$, starting from an arbitrary initial partition $\Pi_{0} \in \mathcal{P}(N)$, enters in finite time the largest weakly invariant set

\[
\mathcal{I}:=\left\{\Pi \in \mathcal{P}(N) \mid \exists \Pi^{\prime} \in F(\Pi) \text { such that } \mathcal{V}\left(\Pi^{\prime}\right)=\mathcal{V}(\Pi)\right\} .
\]

Moreover, the invariant set $\mathcal{I}$ consists of:\\
(a) fixed points corresponding to Shapley-fair and merge-stable partitions; and\\
(b) finite limit cycles generated by value-preserving transitions, induced by persistent alternation between fairness-driven splits and surplus-driven merges.
    
\end{theorem}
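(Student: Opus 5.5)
The plan is to reduce Theorem~\ref{theorem:main} to the generalized discrete LaSalle invariance principle proved in Section~\ref{sec:invariance_principle}, applied with $X=\mathcal{P}(N)$, $F=F^{\mathrm{M}}\circ F^{\mathrm{S}}$ and $\mathcal{V}=(-\Psi,\Phi)$ under the lexicographic order. The set $X$ is finite, so $\mathcal{V}(X)$ is finite and Lemma~\ref{lemma:lexicographic order} applies.

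\textbf{Hypotheses of the LaSalle theorem.} Hypothesis (i), monotonicity along every admissible move, is exactly the first part of Lemma~\ref{lemma:Monotonicity}, which holds under (A1)--(A2) and is recorded as (A3). Hypothesis (ii), the dichotomy, holds for any map with finite values once (i) holds. For each $\Pi$, either some successor has $\mathcal{V}(\Pi')=\mathcal{V}(\Pi)$, or none does. In the second case, (i) forces $\mathcal{V}(\Pi')\succ\mathcal{V}(\Pi)$ for every successor.

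The LaSalle theorem then says every trajectory enters, in finite time, the largest weakly invariant subset of the zero-progress set $\mathcal{E}$, and that the tail is a fixed point or a finite cycle. The set displayed in the theorem is literally $\mathcal{E}$. I will therefore state that the trajectory enters $\mathcal{I}\subseteq\mathcal{E}$, interpreting $\mathcal{I}$ as the largest weakly invariant subset as the statement's wording indicates. Since the tail cycle lies in $\mathcal{E}$, this also gives entry into $\mathcal{E}$.

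\textbf{Characterizing the states in $\mathcal{I}$.} For (a), take a fixed point $\Pi\in F(\Pi)$.
\begin{itemize}
\item \emph{Fairness.} Lemma~\ref{lemma:Fairness monotonicity} shows $\Psi(F^{\mathrm{S}}(\Pi))=0$. Merges leave $\Psi$ unchanged, as used in Lemma~\ref{lemma:Monotonicity}, so $\Psi(\Pi)=\Psi(\Pi')=0$. This is Shapley fairness.
\item \emph{Merge stability.} If $\Pi$ had a strictly profitable pair, the strict part of Lemma~\ref{lemma:Monotonicity} would give $\mathcal{V}(\Pi')\succ\mathcal{V}(\Pi)$ for every successor, contradicting $\Pi\in F(\Pi)$.
\item \emph{Converse.} For an SFMS partition the split rule returns each $S$ unchanged, since $P_-(S)=\emptyset$, and no merge is profitable, so $F(\Pi)=\{\Pi\}$.
\end{itemize}

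For (b), consider a cycle of period $p>1$ in the tail. Along it $\mathcal{V}$ is constant by Step~1 of the LaSalle proof, so every transition preserves both $\Psi$ and $\Phi$. No iteration on the cycle can start from a state with a profitable merge, again by the strict part of Lemma~\ref{lemma:Monotonicity}. Any topological change must therefore come from a split whose effect on $\Phi$ is compensated by subsequent merges. This is the ``value-preserving alternation'' in (b).

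\textbf{Main obstacle.} The characterization step is where I expect trouble, because the lemmas as written are somewhat loose. Lemma~\ref{lemma:Fairness monotonicity} asserts that splitting does not decrease $\Phi$, which is not justified in general. Lemma~\ref{lemma:Monotonicity} asserts that merges do not change $\Psi$, although a merged coalition can have negative Shapley values. I will avoid relying on these claims beyond what (A3) grants. Concretely, I will derive the characterization purely from $\mathcal{V}$-constancy on the tail together with the strictness clause, and I will describe (b) only as the residual class of $\mathcal{V}$-preserving cycles rather than asserting further structure.
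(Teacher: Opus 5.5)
Your route is the paper's: monotonicity from (A3), finite-time stabilization of $\mathcal{V}$, entry into the zero-progress set, then a characterization of the tail. Citing the general LaSalle theorem instead of redoing its Steps 1--3 is a cosmetic difference, and you are right that the displayed set is literally $\mathcal{E}$.

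The genuine gap is the fairness half of (a). You get $\Psi(\Pi)=0$ at a fixed point from ``merges leave $\Psi$ unchanged''. That is the very claim you later disown, and it is indeed false under (A1)--(A2). Take $N=\{1,2,3\}$ with $v(\{i\})=0$, $v(\{1,2\})=0$, $v(\{1,3\})=v(\{2,3\})=-10$, $v(N)=1$.
\begin{itemize}
\item The Shapley values on $N$ are $(2,2,-3)$.
\item The split gives $\{\{1,2\},\{3\}\}$, with $\Psi=0$.
\item The profitable merge ($\Delta=1$) rebuilds $N$, with $\Psi=3$.
\end{itemize}
So $\{N\}$ is a fixed point of $F$ that is not Shapley-fair. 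Your fallback does not repair this. At a fixed point, $\mathcal{V}$-constancy is trivial. The strictness clause, in the form you use for merge stability (``$\Pi$ admits a profitable pair''), says nothing about an unfair $\Pi$ with no profitable pair, such as $\{N\}$.

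The paper's Step 4 makes the same leap (``fixed point $\Rightarrow$ no admissible split''). The same example also shows that Lemma~\ref{lemma:Monotonicity} fails under (A1)--(A2) alone: along $\{\{1,2\},\{3\}\}\to\{N\}$, $\mathcal{V}$ drops from $(0,0)$ to $(-3,1)$. So you must use (A3) as a genuine hypothesis, not as something ``which holds under (A1)--(A2)''.

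The missing idea is to apply (A3) at the split output rather than at $\Pi$. Put $\widetilde\Pi:=F^{\mathrm{S}}(\Pi)$.
\begin{itemize}
\item The correct core of Lemma~\ref{lemma:Fairness monotonicity} (singletons are fair by (A1), and a retained $R$ passed the test) gives $\Psi(\widetilde\Pi)=0$.
\item Hence $F^{\mathrm{S}}(\widetilde\Pi)=\widetilde\Pi$, and therefore $F(\widetilde\Pi)=F^{\mathrm{M}}(\widetilde\Pi)=F(\Pi)$.
\item Monotonicity at $\widetilde\Pi$ gives $\mathcal{V}(\Pi')\succeq(0,\Phi(\widetilde\Pi))$, i.e.\ $\Psi(\Pi')=0$, for every $\Pi'\in F(\Pi)$.
\end{itemize}
In particular, every fixed point is Shapley-fair. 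Alternatively, read the strictness in (A3) literally as ``whenever a merge is executed''. Then a fixed-point transition executes no merge, so $\Pi=F^{\mathrm{S}}(\Pi)$, which forces fairness.

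This repair also sharpens (b). Every $\Pi_t$ with $t\ge 1$ is fair, so on the $\mathcal{V}$-constant tail the split phase is the identity, and the strictness clause rules out profitable pairs. Hence $F(\Pi_t)=\{\Pi_t\}$. Under (A1)--(A3) the tail is always an SFMS fixed point, and the cycles you describe in (b), a split compensated by subsequent merges, cannot occur.
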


\begin{pf}
 Consider any trajectory $\left\{\Pi_{t}\right\}_{t \geq 0}$ generated by the set-valued dynamics $\Pi_{t+1} \in F\left(\Pi_{t}\right)$.

Step 1: Lyapunov monotonicity along trajectories. By Assumption (A3), the Lyapunov sequence $\left\{\mathcal{V}\left(\Pi_{t}\right)\right\}_{t \geq 0}$ is lexicographically nondecreasing along every admissible transition. Moreover, $\mathcal{V}\left(\Pi_{t+1}\right) \succ \mathcal{V}\left(\Pi_{t}\right)$ whenever $\Pi_{t}$ admits a strictly profitable merge.

Step 2: Finite-time stabilization of Lyapunov values. The partition space $\mathcal{P}(N)$ is finite. Consequently, the image set $\mathcal{V}(\mathcal{P}(N))$ is finite. A lexicographically nondecreasing sequence taking values in a finite set must stabilize in finite time. Hence, there exists a finite time $T$ such that $\mathcal{V}\left(\Pi_{t+1}\right)=\mathcal{V}\left(\Pi_{t}\right) \text { for all } t \geq T.$

Step 3: Entry into the weakly invariant set. By definition of $\mathcal{I}$, the equality $\mathcal{V}\left(\Pi_{t+1}\right)=\mathcal{V}\left(\Pi_{t}\right)$ implies that $\Pi_{t} \in \mathcal{I}$. Therefore, every trajectory enters $\mathcal{I}$ in finite time.

Step 4: Structure of the invariant set. Let $\Pi \in \mathcal{I}$. If $\Pi$ is a fixed point of $F$, then no admissible split and no strictly profitable merge exists. Under Split Rule and Assumption \ref{assum:1}, this is equivalent to $\Pi$ being both Shapley-fair and merge-stable, establishing part (a).

If $\Pi$ is not a fixed point, then there exists at least one successor $\Pi^{\prime} \in F(\Pi)$ such that $\mathcal{V}\left(\Pi^{\prime}\right)=\mathcal{V}(\Pi)$. Since strict Lyapunov improvement is ruled out, such transitions must preserve both the total negative Shapley mass $\Psi$ and the total surplus $\Phi$. Because $\mathcal{P}(N)$ is finite, repeated application of value-preserving transitions necessarily produces a finite cycle, establishing part (b).    
\end{pf}

\section{Case Study: Merge-Split Dynamics}
\label{sec:case_study}
We consider a transferable-utility cooperative game ( $N, v$ ) with $N=\{1, \ldots, 10\}$. Let $C= \{1,2,3,4\}$ denote a subset of productive players. The characteristic function $v: 2^{N} \rightarrow \mathbb{R}$ is constructed to exhibit heterogeneous incentives and fairness tensions: players in $C$ have higher standalone value than those outside $C$, coalitions contained entirely in $C$ generate increasing surplus, and coalitions containing all members of $C$ receive an additional synergy bonus. As a result, the game violates marginal monotonicity and induces negative Shapley allocations for certain players in mixed coalitions. To align with Assumption \ref{assum:1} and the invariance results of Theorem \ref{theorem:main}, the characteristic function is additively shifted so that $v(\{i\}) \geq 0$ for all $i \in N$. This shift preserves Shapley signs, merge incentives, and all ordinal comparisons relevant to the merge-split dynamics.

At iteration $t$, the system is in a coalition structure $\Pi_{t}=\left\{S_{1}, \ldots, S_{k}\right\}$. For each coalition $S \in \Pi_{t}$, let $\phi_{i}(S)$ denote the Shapley value of player $i$ in the restricted game $\left.v\right|_{S}$. The dynamics follow the merge-split algorithm introduced in Section \ref{sec:dynamics} with the split phase implemented according to Split Rule:

\begin{itemize}
  \item (Split) Any agent with $\phi_{i}(S)<0$ is isolated as a singleton. The residual coalition is retained only if it is internally Shapley-fair; otherwise, the coalition is fully decomposed into singletons.
  \item (Merge) If no split is applied, any pair of coalitions $A, B \in \Pi_{t}$ satisfying $v(A \cup B)>v(A)+ v(B)$ may merge.
  \item Exactly one split or one merge is applied per iteration.
\end{itemize}

The dynamics are analyzed using the vector Lyapunov function $\mathcal{V}(\Pi)=(-\Psi(\Pi), \Phi(\Pi))$. Here, $\Phi(\Pi)= \sum_{S \in \Pi} v(S)$ denotes the total coalition surplus, while fairness violation term is defined as $\Psi(\Pi)=\sum_{S \in \Pi} \sum_{i \in S} \max \{0,-\phi_{i}(S)\}$.

Starting from a randomly generated initial partition, the dynamics rapidly eliminate fairness violations through singleton splits. In the run reported here, a single split operation at $t=0$ isolates all agents with negative Shapley values and yields the partition $\Pi_{1}=\{\{2,3,4\}\} \cup\{\{i\} \mid i \in N \backslash\{2,3,4\}\}$. This partition is internally Shapley-fair and admits no strictly profitable merge. Consequently, the system enters the invariant set $\mathcal{I}$ at $t=1$ and remains there for all subsequent iterations. Although the algorithm continues to be iterated, neither the coalition structure nor the Lyapunov value changes thereafter, illustrating finite-time entry into the invariant set rather than asymptotic convergence.\\

\begin{figure}[h]
\begin{center}
  \includegraphics[width=0.5\textwidth]{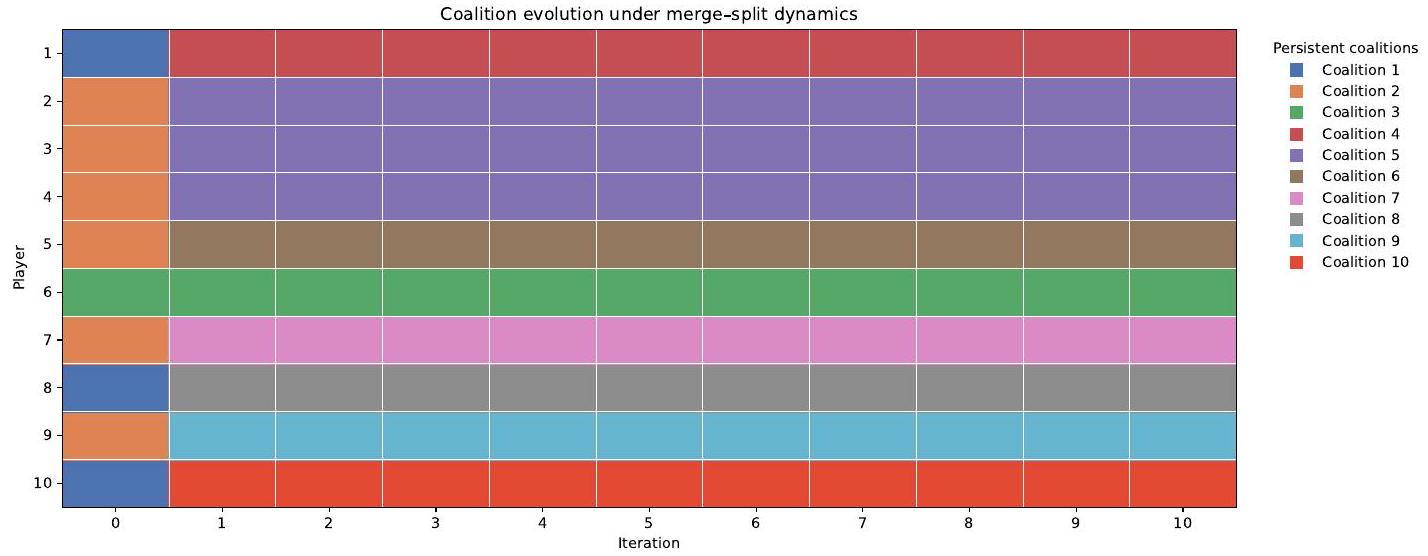}
\caption{Coalition evolution under the merge-split dynamics. After a single split iteration, the system enters the invariant set and the coalition structure remains unchanged.}
\end{center}
\end{figure}

Figure 1 visualizes coalition membership over time. Rows correspond to players and columns to iterations; colors denote coalitions. After the initial transient, coalition membership becomes constant, reflecting stabilization within the invariant set.

\begin{figure}[h]
\begin{center}
  \includegraphics[width=0.45\textwidth]{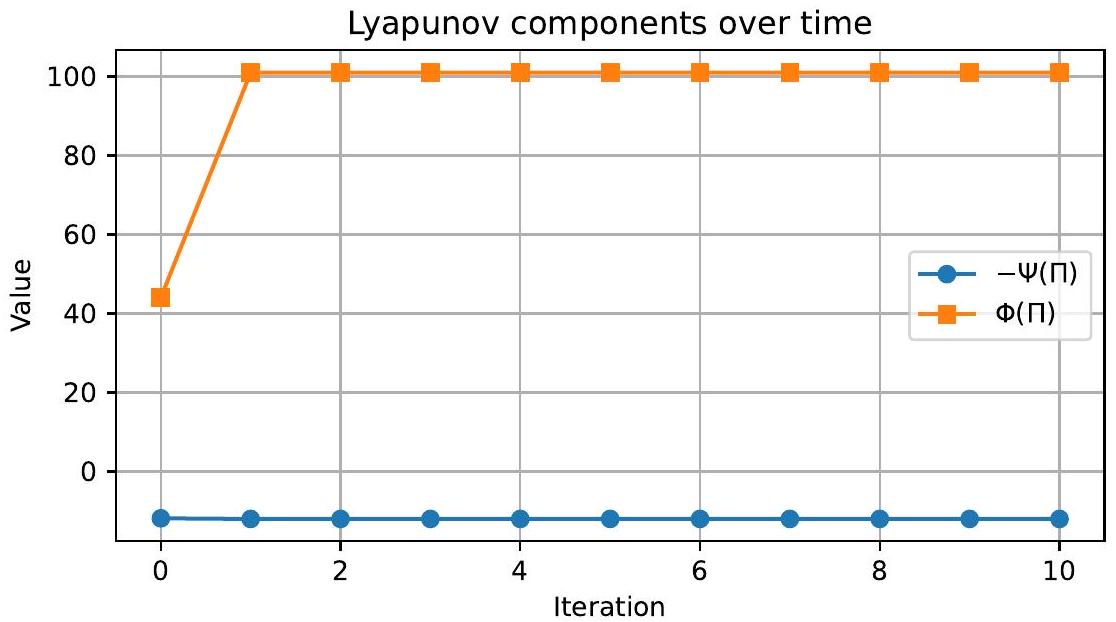}
\caption{Evolution of the Lyapunov components. The vector Lyapunov function stabilizes after finite time, confirming finite-time entry into the invariant set.}
\end{center}
\end{figure}

Figure 2 shows the evolution of the Lyapunov components $-\Psi\left(\Pi_{t}\right)$ and $\Phi\left(\Pi_{t}\right)$. The Lyapunov sequence is lexicographically nondecreasing and stabilizes in finite time.

This case study illustrates the role of Split Rule in enforcing fairness monotonicity and preventing oscillatory behavior. Fairness violations are resolved through targeted singleton isolation, and once all surplus-enhancing merges are exhausted, the dynamics remain within the invariant set. In this example, the invariant set consists of a single Shapley-fair and mergestable fixed point. More generally, as shown in Theorem \ref{theorem:main}, the invariant set may also contain finite cycles when value-preserving transitions exist.

\section{Conclusion}
\label{sec:conclusion}
This paper presents a dynamic control-theoretic framework for coalition formation that harmonizes the often competing objectives of fairness and efficiency. By assigning distinct roles to topological operators—using Shapley value fairness to drive splits and collective surplus to drive merges—we establish a rigorous pathway to equilibrium. The proposed Fairness-Stable split rule is proven to eliminate ``agent-responsible inefficiency,'' ensuring that corrective actions are targeted and non-regressive. Our theoretical analysis, grounded in a vector Lyapunov approach and the LaSalle invariance principle, guarantees finite-time convergence to a set of Shapley-Fair and Merge-Stable (SFMS) partitions. These results bridge the gap between static cooperative game theory and dynamic system stability. Future work will extend this framework to stochastic environments where the characteristic function is learned online from noisy observations, and explore applications in large-scale systems such as LLM agent collaboration networks.

\bibliography{ifacconf}

\end{document}